\newtheorem{proposition}{Proposition}[section]
\theoremstyle{definition}
\newtheorem{definition}[proposition]{Definition}
\theoremstyle{remark}
\newcommand{\selabel}[1]{\label{se:#1}}
\renewcommand{\thefootnote}{\fnsymbol{footnote}}
\title{\MakeUppercase{Graded structure and Hopf structures in parabosonic algebra. An
alternative approach  to bosonisation}\footnote{Talk presented at
the International Conference: ``\textit{New techniques in Hopf
algebras and graded ring theory}",  Brussels, September 19-23,
2006}}
\author{K. Kanakoglou$^{1}$ and C. Daskaloyannis$^{2}$}
\date{}
\renewcommand{\date}{\vspace{-5mm}}
\begin{document}
\maketitle \vspace*{-3mm}\relax
\renewcommand{\thefootnote}{\arabic{footnote}}

\noindent \textit{\small $^1$ Department of Physics, Aristotle University of Thessaloniki, Thessaloniki 54124, GREECE\\
e-mail: kanakoglou@hotmail.com }\\
\noindent \textit{\small $^2$ Department of Mathematics, Aristotle
University of Thessaloniki, Thessaloniki 54124, GREECE\\
 e-mail: daskalo@math.auth.gr}

\begin{abstract}
\noindent Parabosonic algebra in infinite degrees of freedom is
presented as a generalization of the bosonic algebra, from the
viewpoints of both physics and mathematics. The notion of
super-Hopf algebra is shortly discussed and the super-Hopf
algebraic structure of the parabosonic algebra is established
(without appealing to its Lie superalgebraic structure). Two
possible variants of the parabosonic algebra are presented and
their (ordinary) Hopf algebraic structure is estabished: The first
is produced by ``bosonising" the original super-Hopf algebra,
while the second is constructed via a slightly different path.
\end{abstract}

\section*{Introduction}
Parabosonic algebras have a long history both in theoretical and
mathematical physics. Although, formally introduced in the fifties
by Green \cite{Green}, in the context of second quantization,
their history traces back to the fundamental conceptual problems
of quantum mechanics; in particular to Wigner's approach to first
quantization \cite{Wi}. We begin by outlining this story.

In classical physics, all information describing the dynamics of a
given physical system is ``encoded" in it's Hamiltonian $H(p_{i},
q_{i})$, $i=1,\ldots,n$ which is a function of the real variables
$p_{i}, q_{i}$. These are usually called ``canonical variables".
Having determined the Hamiltonian of the system, the dynamics is
extracted through the well-known Hamilton equations:
\begin{equation} \label{Hamilton}
\frac{dq_{i}}{dt} = \frac{\partial H}{\partial p_{i}} \ \ \ \ \ \
\ \ \ \ \ \frac{dp_{i}}{dt} = - \frac{\partial H}{\partial q_{i}}
\end{equation}
The passage from the classical description to the quantum
description, within the framework of the first quantization,
consists of the following procedure: The functional dependence of
the Hamiltonian on the canonical variables is -roughly- retained
but the canonical variables are no more real variables. Instead
they become elements of a unital associative non-commutative
algebra, described in terms of the generators $p_{i}, q_{i}, I$,
$i=1,\ldots,n$ and relations:
\begin{equation} \label{CCR}
[q_{i}, p_{j}] = i \hbar \delta_{ij} I \ \ \ \ \ \ \ \ \ \ [q_{i},
q_{j}] = [p_{i}, p_{j}] = 0
\end{equation}
$I$ is of course the unity of the algebra and $[x, y]$ stands for
$xy-yx$. The states of the system are no more described as
functions $p_{i}(t), q_{i}(t)$ which are solutions of
\eqref{Hamilton} but rather as vectors of a Hilbert space, where
the elements of the above mentioned algebra act. The dynamics is
now determined by the Heisenberg equations of motion:
\begin{equation} \label{Heisenberg}
i \hbar \frac{dq_{i}}{dt} = [q_{i}, H] \ \ \ \ \ \ \ \ \ i \hbar
\frac{dp_{i}}{dt} = [p_{i}, H]
\end{equation}
We of course describe what is known as the Heisenberg picture of
quantum mechanics. Relations \eqref{CCR} are known in the physical
literature as the Heisenberg algebra, or the Heisenberg-Weyl
algebra or more commonly as the Canonical Commutation Relations
often abbreviated as CCR. Their central importance for the
quantization procedure formerly described, lies in the fact that
if one accepts \footnote{of course we do not consider arbitrary
Hamiltonians but functions of the form $H =
\sum_{i=1}^{n}p_{i}^{2} + V(q_{1},\ldots,q_{n})$ which however are
general enough for simple physical systems} the algebraic
relations \eqref{CCR} together with the quantum dynamical
equations \eqref{Heisenberg} then it is an easy matter (see
\cite{Ehrenf}) to extract the classical Hamiltonian equations of
motion \eqref{Hamilton} while on the other hand the acceptance of
the classical equations \eqref{Hamilton} together with \eqref{CCR}
reproduces the quantum dynamics exactly as described by
\eqref{Heisenberg}. In this way the CCR emerge as a fundamental
link between the classical and the quantum description of the
dynamics.

 For  technical reasons it is common to use -instead of
the variables $p_{i}, q_{i}$- the linear combinations:
\begin{equation} \label{creation-destruction op}
b_{j}^{+} = \frac{1}{\sqrt{2}}(q_{j} - ip_{j}) \ \ \ \ \ \ \ \
b_{j}^{-} = \frac{1}{\sqrt{2}}(q_{j} + ip_{j})
\end{equation}
for $j=1,\ldots,n$ in terms of which \eqref{CCR} become (we have
set $\hbar = 1$):
\begin{equation} \label{CCRbose}
[b_{i}^{-}, b_{j}^{+}] =  \delta_{ij} I \ \ \ \ \ \ \ \ \ \
[b_{i}^{-}, b_{j}^{-}] = [b_{i}^{+}, b_{j}^{+}] = 0
\end{equation}
for $i,j=1,\ldots,n$. These latter relations are usually called
the bosonic algebra (of n bosons), and in they case of the
infinite degrees of freedom $i,j = 1, 2, \ldots \ $ they become
the starting point of the free field theory (i.e.: second
quantisation).

The above mentioned approach to the first quantisation is actually
the path followed by the founders of quantum mechanics such as
Dirac, Born, Heisenberg, Schr\"{o}edinger and others. Although it
is not our aim to provide systematic references on this
fascinating story, many of the original papers which paved the way
can be found in \cite{VanDerWar}.

In 1950 E.P. Wigner in a two page publication \cite{Wi}, noticed
that what the above approach implies is that the CCR \eqref{CCR}
are sufficient conditions -but not necessary- for the equivalence
between the classical Hamiltonian equations \eqref{Hamilton} and
the Heisenberg quantum dynamical equations \eqref{Heisenberg}. In
a kind of reversing the problem, Wigner posed the question of
looking for necessary conditions for the simultaneous fulfillment
of \eqref{Hamilton} and \eqref{Heisenberg}. He stated an infinite
set of solutions for the above mentioned problem (although not
claiming to have found the general solution). It is worth noting
that CCR were included as one special case among Wigner's infinite
solutions.

A few years latter in 1953, Green in his celebrated paper
\cite{Green} introduced the parabosonic algebra (in possibly
infinite degrees of freedom), by means of generators and
relations:
\begin{equation} \label{CCRparabose}
\begin{array}{c}
  \big[ B_{m}^{-}, \{ B_{k}^{+}, B_{l}^{-} \} \big] = 2\delta_{km}B_{l}^{-}  \\
           \\
  \big[ B_{m}^{-}, \{ B_{k}^{-}, B_{l}^{-} \} \big]= 0 \\
           \\
  \big[ B_{m}^{+}, \{ B_{k}^{-}, B_{l}^{-} \} \big] = - 2\delta_{lm}B_{k}^{-} - 2\delta_{km}B_{l}^{-}\\
\end{array}
\end{equation}
$k,l,m = 1, 2, \ldots$ and $\{x, y \}$ stands for $xy+yx$. Green
was primarily interested in field theoretic implications of the
above mentioned algebra, in the sense that he considered it as an
alternative starting point for the second quantisation problem,
generalizing \eqref{CCRbose}. However, despite his original
motivation he was the first to realize -see also \cite{OhKa}- that
Wigner's infinite solutions were nothing else but inequivalent
irreducible representations of the parabosonic algebra
\eqref{CCRparabose}. (See also the discussion in \cite{Pal1}).

In what follows, all vector spaces and algebras and all tensor
products will be considered over the field of complex numbers.

\section{Preliminaries: Bosons and Parabosons as superalgebras} \selabel{1}

The parabosonic algebra, was originally defined in terms of
generators and relations by Green \cite{Green} and
Greenberg-Messiah \cite{GreeMe}. We begin with restating their
definition: \\
 Let us consider the vector space $V_{X}$ freely
generated by the elements: $X_{i}^{+}, X_{j}^{-}$, $i,j=1, 2,
...$. Let $T(V_{X})$ denote the tensor algebra of $V_{X}$.
$T(V_{X})$ is -up to isomorphism- the free algebra generated by
the elements of the basis. In $T(V_{X})$ we consider the two-sided
ideals $I_{P_{B}}$, $I_{B}$, generated by the following elements:
\begin{equation} \label{eq:pbdef}
 \big[ \{ X_{i}^{\xi},  X_{j}^{\eta}\}, X_{k}^{\epsilon}  \big] -
 (\epsilon - \eta)\delta_{jk}X_{i}^{\xi} - (\epsilon - \xi)\delta_{ik}X_{j}^{\eta}
\end{equation}
and:
\begin{equation} \label{eq:bdef}
\begin{array}{ccccc}
 \big[ X_{i}^{-}, X_{j}^{+} \big] - \delta_{ij}I_{X} & , & \big[ X_{i}^{-}, X_{j}^{-} \big]
 & , & \big[ X_{i}^{+},  X_{j}^{+} \big]
   \\
\end{array}
\end{equation}
respectively, for all values of $\xi, \eta, \epsilon = \pm 1$ and
$i,j=1, 2, ... \ $. $ \ I_{X}$ is the unity of the tensor algebra.
We now have the following:
\begin{definition} \label{parabosonsbosons}
The parabosonic algebra in $P_{B}$  is the quotient algebra of the
tensor algebra $T(V_{X})$ of $V_{X}$ with the ideal $I_{P_{B}}$:
$$
P_{B} = T(V_{X}) / I_{P_{B}}
$$
The bosonic algebra  $B$ is the quotient algebra of the tensor
algebra $T(V_{X})$ with the ideal $I_{B}$:
$$
B = T(V_{X}) / I_{B}
$$
\end{definition}
We denote by $\pi_{P_{B}}: T(V_{X}) \rightarrow P_{B}$ and
$\pi_{B}: T(V_{X}) \rightarrow B$ respectively, the canonical
projections. The elements $X_{i}^{+}, X_{j}^{-}, I_{X}$, where
$i,j=1, 2, ... \ $ and $I_{X}$ is the unity of the tensor algebra,
are the generators of the tensor algebra $T(V_{X})$. The elements
$\pi_{P_{B}}(X_{i}^{+}), \pi_{P_{B}}(X_{j}^{-}),
\pi_{P_{B}}(I_{X}) \ $, $ \ i,j=1,...$ are a set of generators of
the parabosonic algebra $P_{B}$, and they will be denoted by
$B_{i}^{+}, B_{j}^{-}, I$ for $i,j=1, 2, ...$ respectively, from
now on. $\pi_{P_{B}}(I_{X}) = I$ is the unity of the parabosonic
algebra. On the other hand elements $\pi_{B}(X_{i}^{+}),
\pi_{B}(X_{j}^{-}), \pi_{B}(I_{X}) \ $, $ \ i,j=1, 2, ...$ are a
set of generators of the bosonic algebra $B$, and they will be
denoted by $b_{i}^{+}, b_{j}^{-}, I$ for $i,j=1, 2, ...$
respectively, from now on. $\pi_{B}(I_{X}) = I$ is the unity of
the bosonic algebra.

Based on the above definitions we prove now the following
proposition which clarifies the relationship between bosonic and
parabosonic algebras:
\begin{proposition} \label{parabosonstobosons}
The parabosonic algebra $P_{B}$ and the bosonic algebra $B$ are
both $\mathbb{Z}_{2}$-graded algebras with their generators
$B_{i}^{\pm}$ and $b_{i}^{\pm}$ respectively, $i,j=1, 2, ...$,
being odd elements. The bosonic algebra $B$ is a quotient algebra
of the parabosonic algebra $P_{B}$. The ``replacement" map $\phi:
P_{B} \rightarrow B$ defined by: $\phi(B_{i}^{\pm}) = b_{i}^{\pm}$
is a $\mathbb{Z}_{2}$-graded algebra epimorphism (i.e.: an even
algebra epimorphism).
\end{proposition}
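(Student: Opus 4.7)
The plan is to transport everything back to the free tensor algebra $T(V_X)$, install a $\mathbb{Z}_2$-grading there, verify that both defining ideals are homogeneous for it so that the gradings descend to the quotients, and finally produce $\phi$ as a factorization of $\pi_B$ through $\pi_{P_B}$.

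First, since $T(V_X)$ is free on $V_X$, I can declare $X_i^\pm$ to be odd (and $I_X$ even) and extend uniquely to a $\mathbb{Z}_2$-algebra grading: the homogeneous component of degree $\bar k\in\mathbb{Z}_2$ is spanned by tensor monomials of length $\equiv k\pmod 2$. A quick inspection of the generators settles homogeneity of the ideals. Each generator of $I_{P_B}$ in \equref{pbdef} is the sum of a triple-bracket expression (a linear combination of degree-three monomials, hence odd) and two degree-one terms $X_i^\xi$, $X_j^\eta$ (also odd), so $I_{P_B}$ is a graded ideal concentrated in the odd part. Each generator of $I_B$ in \equref{bdef} is the sum of a degree-two commutator (even) and at most a scalar multiple of $I_X$ (even), so $I_B$ is a graded ideal concentrated in the even part. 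Consequently $P_B$ and $B$ inherit $\mathbb{Z}_2$-gradings making $\pi_{P_B}$ and $\pi_B$ even, and under these gradings $B_i^\pm=\pi_{P_B}(X_i^\pm)$ and $b_i^\pm=\pi_B(X_i^\pm)$ are odd.

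For the remaining two assertions it is enough to prove $I_{P_B}\subseteq I_B$: by the universal property of the quotient, $\pi_B$ then factors through $\pi_{P_B}$, producing an algebra epimorphism $\phi:P_B\to B$ with $\phi(B_i^\pm)=b_i^\pm$; this exhibits $B$ as a quotient of $P_B$, and because $\phi$ is determined on an odd generating set by odd images, it is automatically even for the inherited gradings. The inclusion $I_{P_B}\subseteq I_B$ amounts to showing that every defining parabosonic relation already holds in $B$, i.e.\ that $[\{b_i^\xi,b_j^\eta\},b_k^\epsilon]=(\epsilon-\eta)\delta_{jk}b_i^\xi+(\epsilon-\xi)\delta_{ik}b_j^\eta$ in the bosonic algebra for all choices of signs and indices. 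This is the one genuine computational step, and it is a short application of the Leibniz identity $[ab,c]=a[b,c]+[a,c]b$ together with the fact that in $B$ every commutator $[b_i^\xi,b_j^\eta]$ is a central scalar, so the outer bracket only sees the purely quadratic part of the anticommutator and collapses to the displayed linear combination.

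The main obstacle is essentially bookkeeping rather than conceptual: ensuring that the sign conventions $\xi,\eta,\epsilon\in\{\pm 1\}$ appearing in the parabosonic relation \equref{pbdef} match the four cases of the bosonic CCR in \equref{bdef}, so that the computation of the previous paragraph is valid uniformly in the three sign parameters. Everything else (homogeneity of the ideals, descent of the grading, factorization of $\pi_B$, preservation of grading by $\phi$) is formal.
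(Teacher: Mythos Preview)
Your proposal is correct and follows essentially the same route as the paper: grade $T(V_X)$ by monomial parity, observe both ideals are homogeneous so the grading descends, verify that the bosonic generators satisfy the parabosonic relations to get $I_{P_B}\subseteq I_B$, and then factor $\pi_B$ through $\pi_{P_B}$. The only cosmetic difference is that the paper phrases the final step via the third isomorphism theorem $(T(V_X)/I_{P_B})\big/(I_B/I_{P_B})\cong T(V_X)/I_B$ rather than the universal property of the quotient, but this is the same argument.
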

\begin{proof}
It is obvious that the tensor algebra $T(V_{X})$ is a
$\mathbb{Z}_{2}$-graded algebra with the monomials being
homogeneous elements. If $x$ is an arbitrary monomial of the
tensor algebra, then $deg(x) = 0$, namely $x$ is an even element,
if it constitutes of an even number of factors (an even number of
generators of $T(V_{X})$) and $deg(x) = 1$, namely $x$ is an odd
element, if it constitutes of an odd number of factors (an odd
number of generators of $T(V_{X})$). The generators $X_{i}^{+},
X_{j}^{-} \ $, $ \ i,j=1,...,n$ are odd elements in the above
mentioned gradation.
 In view of the above description we can easily conclude that the
$\mathbb{Z}_{2}$-gradation of the tensor algebra is immediately
``transfered" to the algebras $P_{B}$ and $B$: Both ideals
$I_{P_{B}}$ and $I_{B}$ are homogeneous ideals of the tensor
algebra, since they are generated by homogeneous elements of
$T(V_{X})$. Consequently, the projection homomorphisms
$\pi_{P_{B}}$ and $\pi_{B}$ are homogeneous algebra maps of degree
zero, or we can equivalently say that they are even algebra
homomorphisms.
 We can straightforwardly check that the bosons
satisfy the paraboson relations, i.e:
$$
\begin{array}{c}
  \pi_{B}( \big[ \{ X_{i}^{\xi},  X_{j}^{\eta}\},
X_{k}^{\epsilon}  \big] -
 (\epsilon - \eta)\delta_{jk}X_{i}^{\xi} - (\epsilon -
 \xi)\delta_{ik}X_{j}^{\eta} ) = \\
              \\
 = \big[ \{ b_{i}^{\xi},  b_{j}^{\eta}\},
b_{k}^{\epsilon}  \big] -
 (\epsilon - \eta)\delta_{jk}b_{i}^{\xi} - (\epsilon -
 \xi)\delta_{ik}b_{j}^{\eta} = 0 \\
\end{array}
$$
which simply means that: $ker(\pi_{P_{B}}) \subseteq ker(\pi_{B})$
or equivalently: $I_{P_{B}} \subseteq I_{B}$. By the correspodence
theorem for rings, we get that the set $I_{B} / I_{P_{B}} =
\pi_{P_{B}}(I_{B})$ is an homogeneous ideal of the algebra
$P_{B}$, and applying the third isomorphism theorem for rings we
get:
\begin{equation} \label{thirdisomortheor}
P_{B} \Big/ (I_{B} / I_{P_{B}}) = (T(V_{X}) / I_{P_{B}}) \Big/
(I_{B} / I_{P_{B}}) \cong T(V_{X}) \Big/ I_{B} = B
\end{equation}
Thus we have shown that the bosonic algebra $B$ is a quotient
algebra of the parabosonic algebra $P_{B}$. The fact that
$I_{P_{B}} \subseteq I_{B}$ implies that $\pi_{B}$ is uniquely
extended to an even algebra homomorphism $\phi: P_{B} \rightarrow
B$, where $\phi$ is determined by it's values on the generators
$B_{i}^{\pm}$ of $P_{B}$, i.e.: $\phi(B_{i}^{\pm}) = b_{i}^{\pm}$.
Recalling now that: $ker\phi = I_{B} / I_{P_{B}} =
\pi_{P_{B}}(I_{B})$ and using equation \eqref{thirdisomortheor},
we get that: $P_{B} / ker\phi \cong B$ which completes the proof
that $\phi$ is an epimorphism of $\mathbb{Z}_{2}$-graded algebras
(or: an even epimorphism).
\end{proof}
Note that $ker\phi$ is exactly the ideal of $P_{B}$ generated by
the elements of the form: $ \ \big[ B_{i}^{-}, B_{j}^{+} \big] -
\delta_{ij}I \ $, $ \ \big[ B_{i}^{-}, B_{j}^{-} \big] \ $, $ \
\big[ B_{i}^{+},  B_{j}^{+} \big] \ $ for all values of $i,j=1, 2,
...$, and $I$ is the unity of the $P_{B}$ algebra.

 The rise of the
theory of quasitriangular Hopf algebras from the mid-80's
\cite{Dri} and thereafter and especially the study and abstraction
of their representations (see: \cite{Maj1, Maj2}, \cite{Mon} and
references therein), has provided us with a novel understanding
\footnote{it is worth noting, that some of these ideas already
appear in \cite{Stee}} of the notion and the properties of
$\mathbb{G}$-graded algebras,
where $\mathbb{G}$ is a finite abelian group:  \\
 Restricting ourselves to the simplest case
where $\mathbb{G} = \mathbb{Z}_{2}$, we recall that an algebra $A$
being a $\mathbb{Z}_{2}$-graded algebra (in the physics literature
the term superalgebra is also of widespread use) is equivalent to
saying that $A$ is a $\mathbb{CZ}_{2}$-module algebra, via the
$\mathbb{Z}_{2}$-action: $g \vartriangleright a = (-1)^{|a|}a \ $
(for $a$ homogeneous in $A$). What we actually mean is that $A$,
apart from being an algebra is also a $\mathbb{CZ}_{2}$-module and
at the same time it's structure maps (i.e.: the multiplication and
the unity map which embeds the field into the center of the
algebra) are $\mathbb{CZ}_{2}$-module maps which is nothing else
but homogeneous linear maps of degree $0$ (i.e.: even linear
maps). Note, that under the above action, any element of $A$
decomposes uniquely as: $a = \frac{a + (g \vartriangleright a)}{2}
+ \frac{a - (g \vartriangleright a)}{2}$. We can further summarize
the above description saying that $A$ is an algebra in the braided
monoidal category of $\mathbb{CZ}_{2}$-modules
${}_{\mathbb{CZ}_{2}}\mathcal{M}$. In this case the braiding is
induced by the non-trivial quasitriangular structure of the
$\mathbb{CZ}_{2}$ Hopf algebra i.e. by the non-trivial $R$-matrix:
\begin{equation} \label{eq:nontrivRmatrcz2}
R_{g} = \frac{1}{2}(1 \otimes 1 + 1 \otimes g + g \otimes 1 - g
\otimes g)
\end{equation}
In the above relation $1, g$ are the elements of the
$\mathbb{Z}_{2}$ group (written multiplicatively).

 We digress here for a moment, to recall
that (see \cite{Maj1, Maj2} or \cite{Mon}) if $(H,R_{H})$ is a
quasitriangular Hopf algebra, then the category of modules
${}_{H}\mathcal{M}$ is a braided monoidal category, where the
braiding is given by a natural family of isomorphisms $\Psi_{V,W}:
V \otimes W \cong W \otimes V$, given explicitly by:
\begin{equation} \label{eq:braid}
\Psi_{V,W}(v \otimes w) = \sum (R_{H}^{(2)} \vartriangleright w)
\otimes (R_{H}^{(1)} \vartriangleright v)
\end{equation}
for any $V,W \in obj({}_{H}\mathcal{M})$. By $v,w$ we denote any
 elements of $V,W$ respectively.   \\
Combining eq. \eqref{eq:nontrivRmatrcz2} and \eqref{eq:braid} we
immediately get the braiding in the
${}_{\mathbb{CZ}_{2}}\mathcal{M}$ category:
\begin{equation} \label{symmbraid}
\Psi_{V,W}(v \otimes w) = (-1)^{|v||w|} w \otimes v
\end{equation}
In the above relation $\ |.| \ $ denotes the degree of an
homogeneous element of either $V$ or $W$ (i.e.: $|x|=0$ if $x$ is
an even element and $|x|=1$ if $x$ is an odd element).
 This is obviously a symmetric braiding,
since $\Psi_{V,W} \circ \Psi_{W,V} = Id$, so we actually have a
symmetric monoidal category ${}_{\mathbb{CZ}_{2}}\mathcal{M}$,
rather than a truly braided one.

 The really important thing about
the existence of the braiding \eqref{symmbraid} is that it
provides us with an alternative way of forming tensor products of
$\mathbb{Z}_{2}$-graded algebras: If $A$ is a superalgebra with
multiplication $m: A \otimes A \rightarrow A$,  then the super
vector space $A \otimes A$ (with the obvious
$\mathbb{Z}_{2}$-gradation) equipped with the associative
multiplication
\begin{equation} \label{braidedtenspr}
(m \otimes m)(Id \otimes \Psi_{A,A} \otimes Id): A \otimes A
\otimes A \otimes A \longrightarrow A \otimes A
\end{equation}
 given by: $(a \otimes b)(c \otimes d) = (-1)^{|b||c|}ac \otimes
 bd$ ($b,c$ homogeneous in $A$), readily becomes a superalgebra (or equivalently an algebra in the
braided monoidal category of $\mathbb{CZ}_{2}$-modules
${}_{\mathbb{CZ}_{2}}\mathcal{M}$) which we will denote: $A
\underline{\otimes} A$ and call the braided tensor product algebra
from now on.

\section{Main results}\selabel{2}

\subsection{Parabosons as super-Hopf algebras}

The notion of $\mathbb{G}$-graded Hopf algebra, for $\mathbb{G}$ a
finite abelian group, is not a new one neither in physics nor in
mathematics. The idea appears already in the work of Milnor and
Moore \cite{MiMo}, where we actually have $\mathbb{Z}$-graded Hopf
algebras. On the other hand, universal enveloping algebras of Lie
superalgebras are widely used in physics and they are examples of
$\mathbb{Z}_{2}$-graded Hopf algebras (see for example \cite{Ko},
\cite{Scheu}). These structures are strongly resemblant of Hopf
algebras but they are not Hopf algebras at least in the ordinary
sense.

Restricting again to the simplest case where $\mathbb{G} =
\mathbb{Z}_{2}$ we briefly recall this idea: An algebra $A$ being
a $\mathbb{Z}_{2}$-graded Hopf algebra (or super-Hopf algebra)
means first of all that $A$ is a $\mathbb{Z}_{2}$-graded
associative algebra (or: superalgebra). We now consider the
braided tensor product algebra $A \underline{\otimes} A$. Then $A$
is equipped with a coproduct
$$
\underline{\Delta} : A \rightarrow A \underline{\otimes} A
$$
which is an superalgebra homomorphism from $A$ to the braided
tensor product algebra  $A \underline{\otimes} A$ :
$$
\underline{\Delta}(ab) = \sum (-1)^{|a_{2}||b_{1}|}a_{1}b_{1}
\otimes a_{2}b_{2} = \underline{\Delta}(a) \cdot
\underline{\Delta}(b)
$$
for any $a,b$ in $A$, with $\underline{\Delta}(a) = \sum a_{1}
\otimes a_{2}$, $\underline{\Delta}(b) = \sum b_{1} \otimes
b_{2}$, and $a_{2}$, $b_{1}$ homogeneous.

 Similarly, $A$ is equipped with an antipode $\underline{S} : A
\rightarrow A$ which is not an algebra anti-homomorphism (as in
ordinary Hopf algebras) but a  superalgebra anti-homomorphism (or:
``twisted" anti-homomorphism or: braided anti-homomorphism) in the
following sense (for any homogeneous $a,b \in A$):
$$
\underline{S}(ab) = (-1)^{|a||b|}\underline{S}(b)\underline{S}(a)
$$
The rest of the axioms which complete the super-Hopf algebraic
structure (i.e.: coassociativity, counity property, and
compatibility with the antipode) have the same formal description
as in ordinary Hopf algebras.

 Once again, the abstraction of the representation theory of
quasitriangular Hopf algebras provides us with a language in which
the above description becomes much more compact: We simply say
that $A$ is a Hopf algebra in the braided monoidal category of
$\mathbb{CZ}_{2}$-modules ${}_{\mathbb{CZ}_{2}}\mathcal{M}$ or: a
braided group where the braiding is given in equation
\eqref{symmbraid}. What we actually mean is that $A$ is
simultaneously an algebra, a coalgebra and a
$\mathbb{CZ}_{2}$-module, while all the structure maps of $A$
(multiplication, comultiplication, unity, counity and the
antipode) are also $\mathbb{CZ}_{2}$-module maps and at the same
time the comultiplication $\underline{\Delta} : A \rightarrow A
\underline{\otimes} A$ and the counit are algebra morphisms in the
category ${}_{\mathbb{CZ}_{2}}\mathcal{M}$ (see also \cite{Maj1,
Maj2} or \cite{Mon} for a more detailed description).
\\
We proceed now to the proof of the following proposition which
establishes the super-Hopf algebraic structure of the parabosonic
algebra $P_{B}$:
\begin{proposition}
The parabosonic algebra  equipped with the even linear maps
$\underline{\Delta}: P_{B} \rightarrow P_{B} \underline{\otimes}
P_{B} \ \ $, $\ \ \underline{S}: P_{B} \rightarrow P_{B} \ \ $, $\
\ \underline{\varepsilon}: P_{B} \rightarrow \mathbb{C} \ \ $,
determined by their values on the generators:
\begin{equation} \label{eq:HopfPB}
\begin{array}{ccccc}
  \underline{\Delta}(B_{i}^{\pm}) = 1 \otimes B_{i}^{\pm} + B_{i}^{\pm} \otimes 1 &
  & \underline{\varepsilon}(B_{i}^{\pm}) = 0  & & \underline{S}(B_{i}^{\pm}) = - B_{i}^{\pm} \\
\end{array}
\end{equation}
becomes a super-Hopf algebra.
\end{proposition}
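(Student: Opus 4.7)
The plan is to build the three structure maps first on the free superalgebra $T(V_X)$, using its universal property, and then show that each descends to the quotient $P_B = T(V_X)/I_{P_B}$. Since $X_i^{\pm}$ are odd in the $\mathbb{Z}_2$-gradation, the assignments $X_i^{\pm}\mapsto 1\otimes X_i^{\pm}+X_i^{\pm}\otimes 1$, $X_i^{\pm}\mapsto 0$, $X_i^{\pm}\mapsto -X_i^{\pm}$ extend uniquely: $\underline{\Delta}$ and $\underline{\varepsilon}$ as (even) superalgebra homomorphisms into $T(V_X)\underline{\otimes}T(V_X)$ and $\mathbb{C}$ respectively, and $\underline{S}$ as an (even) braided anti-homomorphism of $T(V_X)$. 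All three are homogeneous of degree $0$ by construction.

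The heart of the matter is to show that these maps kill the ideal $I_{P_B}$ modulo the corresponding ideal on the target side. For the coproduct, one first computes $\underline{\Delta}(X_i^{\xi}X_j^{\eta})$ using the braided product rule $(a\otimes b)(c\otimes d)=(-1)^{|b||c|}ac\otimes bd$ and the oddness of the generators; the cross terms pick up a minus sign, and a direct calculation then yields $\underline{\Delta}(\{X_i^{\xi},X_j^{\eta}\})=1\otimes\{X_i^{\xi},X_j^{\eta}\}+\{X_i^{\xi},X_j^{\eta}\}\otimes 1$. Because $\{X_i^{\xi},X_j^{\eta}\}$ is even while $X_k^{\epsilon}$ is odd, another application gives
\[
\underline{\Delta}\bigl([\{X_i^{\xi},X_j^{\eta}\},X_k^{\epsilon}]\bigr)=1\otimes[\{X_i^{\xi},X_j^{\eta}\},X_k^{\epsilon}]+[\{X_i^{\xi},X_j^{\eta}\},X_k^{\epsilon}]\otimes 1,
\]
so that $\underline{\Delta}$ sends each generator of $I_{P_B}$ to an element of the form $1\otimes r+r\otimes 1$ with $r\in I_{P_B}$, which lies in $\ker(\pi_{P_B}\otimes\pi_{P_B})$. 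Hence $\underline{\Delta}$ descends to $P_B\to P_B\underline{\otimes}P_B$. For the counit, $\underline{\varepsilon}$ obviously vanishes on every element of $I_{P_B}$. For the antipode, the braided anti-multiplicativity combined with the oddness gives $\underline{S}(\{X_i^{\xi},X_j^{\eta}\})=-\{X_i^{\xi},X_j^{\eta}\}$ and $\underline{S}([\{X_i^{\xi},X_j^{\eta}\},X_k^{\epsilon}])=-[\{X_i^{\xi},X_j^{\eta}\},X_k^{\epsilon}]$, so the paraboson relation $R=[\{X_i^{\xi},X_j^{\eta}\},X_k^{\epsilon}]-(\epsilon-\eta)\delta_{jk}X_i^{\xi}-(\epsilon-\xi)\delta_{ik}X_j^{\eta}$ is mapped to $-R\in I_{P_B}$, and $\underline{S}$ descends as well.

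Once well-definedness is secured, the super-Hopf axioms (coassociativity, counit property, and the antipode identity $\underline{S}\ast\mathrm{id}=\mathrm{id}\ast\underline{S}=\eta\circ\underline{\varepsilon}$) can be verified on the generators $B_i^{\pm}$, where they hold trivially from \eqref{eq:HopfPB}: for instance $(\underline{\Delta}\otimes\mathrm{id})\underline{\Delta}(B_i^{\pm})=1\otimes 1\otimes B_i^{\pm}+1\otimes B_i^{\pm}\otimes 1+B_i^{\pm}\otimes 1\otimes 1=(\mathrm{id}\otimes\underline{\Delta})\underline{\Delta}(B_i^{\pm})$, and the convolution $\underline{S}\ast\mathrm{id}$ evaluated on $B_i^{\pm}$ gives $-B_i^{\pm}+B_i^{\pm}=0=\eta\underline{\varepsilon}(B_i^{\pm})$. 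Since both sides of each axiom are (super)algebra morphisms or braided anti-morphisms from $P_B$ to an appropriate target, and agree on a generating set, they agree everywhere.

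The main obstacle is the well-definedness of $\underline{\Delta}$: one must carefully carry out the braided product expansion and verify that the sign conventions in $T(V_X)\underline{\otimes}T(V_X)$ exactly conspire so that the anticommutator of two odd primitives remains primitive and the super-commutator of an even element with an odd primitive remains primitive. Every other step is a routine consequence of the universal property of $T(V_X)$ together with the evaluation of the coproduct, counit and antipode on the degree-$1$ generators.
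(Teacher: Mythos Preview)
Your proof is correct and follows essentially the same route as the paper: use the universal property of $T(V_X)$ to extend $\underline{\Delta}$, $\underline{\varepsilon}$, $\underline{S}$ from the generators, check that the parabosonic relations lie in the appropriate kernel so the maps descend to $P_B$, and then verify the super-Hopf axioms on generators. The only cosmetic difference is that the paper sends $T(V_X)$ directly into $P_B\underline{\otimes}P_B$ (respectively $P_B^{op}$) rather than into $T(V_X)\underline{\otimes}T(V_X)$ and then projecting, but this is the same argument up to composing with $\pi_{P_B}\otimes\pi_{P_B}$.
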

\begin{proof}
Recall that by definition $P_{B} = T(V_{X}) / I_{P_{B}}$. Consider
the linear map: $\underline{\Delta}: V_{X} \rightarrow P_{B}
\underline{\otimes} P_{B}$ determined by it's values on the basis
elements specified by: $\underline{\Delta}(X_{i}^{\pm}) = I
\otimes B_{i}^{\pm} + B_{i}^{\pm} \otimes I$. By the universality
of the tensor algebra this map is uniquely extended to a
superalgebra homomorphism: $\underline{\Delta}: T(V_{X})
\rightarrow P_{B} \underline{\otimes} P_{B}$. Now we compute:
$$
\underline{\Delta}(\big[ \{ X_{i}^{\xi},  X_{j}^{\eta}\},
X_{k}^{\epsilon}  \big] -
 (\epsilon - \eta)\delta_{jk}X_{i}^{\xi} - (\epsilon -
 \xi)\delta_{ik}X_{j}^{\eta})= 0
$$
This means that $I_{P_{B}} \subseteq ker\underline{\Delta}$, which
in turn implies that $\underline{\Delta}$ is uniquely extended as
a superalgebra homomorphism: $\underline{\Delta}: P_{B}
\rightarrow P_{B} \underline{\otimes} P_{B}$, with values on the
generators determined by \eqref{eq:HopfPB}. Proceeding the same
way we construct the maps $\ \underline{\varepsilon} \ $, $\ \
\underline{S} \ $, as determined in \eqref{eq:HopfPB}.

Note here that in the case of the antipode $\underline{S}$ we need
the notion of the $\mathbb{Z}_{2}$-graded opposite algebra (or:
opposite superalgera) $P_{B}^{op}$, which is a superalgebra
defined as follows: $P_{B}^{op}$ has the same underlying super
vector space as $P_{B}$, but the multiplication is now defined as:
$a \cdot b = (-1)^{|a||b|}ba$, for all $a,b \in P_{B}$. (In the
right hand side, the product is of course the product of $P_{B}$).
We start by defining a linear map $\underline{S}: V_{X}
\rightarrow P_{B}^{op}$ by: $\underline{S}(X_{i}^{\pm}) =
-B_{i}^{\pm}$ which is (uniquely) extended to a superalgebra
homomorphism: $\underline{S}: T(V_{X}) \rightarrow P_{B}^{op}$.
The fact that $I_{P_{B}} \subseteq ker\underline{S}$ implies that
$\underline{S}$ is uniquely extended to a superalgebra
homomorphism $\underline{S}: P_{B} \rightarrow P_{B}^{op}$, thus
to a superalgebra anti-homomorphism: $\underline{S}: P_{B}
\rightarrow P_{B}$ with values on the generators determined by
\eqref{eq:HopfPB}.

Now it is sufficient to verify the rest of the super-Hopf algebra
axioms (coassociativity, counity and the compatibility condition
for the antipode) on the generators of $P_{B}$. This can be done
with straigthforward computations.
\end{proof}
Let us note here, that the above proposition generalizes a result
which -in the case of finite degrees of freedom- is a direct
consequence of the work in \cite{Pal}. In that work the
parabosonic algebra in $2n$ generators ($n$-paraboson algebra)
$P_{B}^{(n)}$ is shown to be isomorphic to the universal
enveloping algebra of the orthosymplectic Lie superalgebra:
$P_{B}^{(n)} \cong U(B(0,n))$. See also the discussion in
\cite{KaDa1, KaDa2}.

\subsection{Ordinary Hopf structures for parabosons}

\subsubsection{Review of the bosonisation technique}

A general scheme for ``transforming" a Hopf algebra $A$ in the
braided category ${}_{H}\mathcal{M}$ ($H$: some quasitriangular
Hopf algebra) into an ordinary one, namely the smash product Hopf
algebra: $A \star H$, such that the two algebras have equivalent
module categories, has been developed during '90 's. The original
reference is \cite{Maj1} (see also \cite{Maj2, Maj3}). The
technique is called bosonisation, the term coming from physics.
This technique uses ideas developed in \cite{Ra}, \cite{Mo}. It is
also presented and applied in \cite{Fi}, \cite{FiMon},
\cite{Andru}. We review the main points of the above method:

 In general, $A$ being a Hopf algebra in a category, means that
 $A$ apart from being an algebra and a coalgebra, is also an object of
 the category and at the same time it's structure maps are morphisms in the category.
 In particular, if $H$ is some quasitriangular Hopf algebra, $A$ being
a Hopf algebra in the braided monoidal category
${}_{H}\mathcal{M}$ , means that the $H$-module $A$ is an algebra
in ${}_{H}\mathcal{M}$ (or: $H$-module algebra) and a coalgebra in
${}_{H}\mathcal{M}$ (or: $H$-module coalgebra) and at the same
time $\Delta_{A}$ and $\varepsilon_{A}$ are algebra morphisms in
the category ${}_{H}\mathcal{M}$. (For more details
on the above definitions one may consult for example \cite{Mon}). \\
Since $A$ is an $H$-module algebra we can form the cross product
algebra $A \rtimes H$ (also called: smash product algebra) which
as a k-vector space is $A \otimes H$ (i.e. we write: $a \rtimes h
\equiv a \otimes h$ for every $a \in A$, $h \in H$), with
multiplication given by:
\begin{equation} \label{eq:crosspralg}
(b \otimes h)(c \otimes g) = \sum b(h_{1} \vartriangleright c)
\otimes h_{2}g
\end{equation}
$\forall$ $b,c \in A$ and $h,g \in H$, and the usual tensor
product unit. \\
On the other hand $A$ is a (left) $H$-module coalgebra with $H$:
quasitriangular through the $R$-matrix: $R_{H} = \sum R_{H}^{(1)}
\otimes R_{H}^{(2)}$.
 Quasitriangularity
``switches" the (left) action of $H$ on $A$ into a (left) coaction
$\rho: A \rightarrow H \otimes A$ through:
\begin{equation} \label{eq:act-coact}
\rho(a) = \sum R_{H}^{(2)} \otimes (R_{H}^{(1)} \vartriangleright
a)
\end{equation}
and $A$ endowed with this coaction becomes (see \cite{Maj2, Maj3})
a (left) $H$-comodule coalgebra or equivalently a coalgebra in
${}^{H}\mathcal{M}$ (meaning that $\Delta_{A}$ and
$\varepsilon_{A}$ are (left) $H$-comodule morphisms, see \cite{Mon}). \\
We recall here (see: \cite{Maj2, Maj3}) that when $H$ is a Hopf
algebra and $A$ is a (left) $H$-comodule coalgebra with the (left)
$H$-coaction given by: $\rho(a) = \sum a^{(1)} \otimes a^{(0)}$ ,
one may form the cross coproduct coalgebra $A \rtimes H$, which as
a k-vector space is $A \otimes H$ (i.e. we write: $a \rtimes h
\equiv a \otimes h$ for every $a \in A$, $h \in H$), with
comultiplication given by:
\begin{equation} \label{eq:crosscoprcoalg}
\Delta(a \otimes h) = \sum a_{1} \otimes a_{2}^{ \ (1)} \ h_{1}
\otimes a_{2}^{ \ (0)} \otimes h_{2}
\end{equation}
and counit: $\varepsilon(a \otimes h) = \varepsilon_{A}(a)
\varepsilon_{H}(h)$. (In the above: $\Delta_{A}(a) = \sum a_{1}
\otimes a_{2}$ and we use in the elements of $A$ upper indices
included in parenthesis to denote the components of the coaction
according to the Sweedler notation, with the convention that
$a^{(i)} \in H$
for $i \neq 0$). \\
Now we proceed by applying the above described construction of the
cross coproduct coalgebra $A \rtimes H$ , with the special form of
the (left) coaction given by eq. \eqref{eq:act-coact}. Replacing
thus eq. \eqref{eq:act-coact} into eq. \eqref{eq:crosscoprcoalg}
we get for the special case of the quasitriangular Hopf algebra H
the cross coproduct comultiplication:
\begin{equation} \label{eq:crosscoprcoalgR}
\Delta(a \otimes h) = \sum a_{1} \otimes R_{H}^{(2)}h_{1} \otimes
(R_{H}^{(1)} \vartriangleright a_{2}) \otimes h_{2}
\end{equation}
Finally we can show that the cross product algebra (with
multiplication given by \eqref{eq:crosspralg}) and the cross
coproduct coalgebra (with comultiplication given by
\eqref{eq:crosscoprcoalgR}) fit together and form a bialgebra
(see: \cite{Maj2, Maj3, Mo, Ra}). This bialgebra, furnished with
an antipode:
\begin{equation} \label{antipodecrosspr}
S(a \otimes h) = (S_{H}(h_{2}))u(R^{(1)} \vartriangleright
S_{A}(a)) \otimes S(R^{(2)}h_{1})
\end{equation}
where $u = \sum S_{H}(R^{(2)})R^{(1)}$, and $S_{A}$ the (braided)
antipode of $A$, becomes (see \cite{Maj2}) an ordinary Hopf
algebra. This is the smash product Hopf algebra denoted $A \star
H$.  \\
Apart from the above described construction, it is worth
mentioning two more important points proved in \cite{Maj1}: First,
it is shown that if $H$ is triangular and $A$ is quasitriangular
in the category ${}_{H}\mathcal{M}$, then $A \star H$ is
(ordinarily) quasitriangular. Second, it is shown that the
category of the braided modules of $A$ ($A$-modules in
${}_{H}\mathcal{M}$) is equivalent to the category of the
(ordinary) modules of $A \star H$.

   \subsubsection{An example of Bosonisation}

 In the special case that $A$ is some super-Hopf
algebra, then: $H = \mathbb{CZ}_{2}$, equipped with it's
non-trivial quasitriangular structure, formerly mentioned. In this
case, the technique simplifies and the ordinary Hopf algebra
produced is the smash product Hopf algebra $A \star
\mathbb{CZ}_{2}$. The grading in $A$ is induced by the
$\mathbb{CZ}_{2}$-action on $A$:
\begin{equation} \label{eq:cz2action}
g \vartriangleright a = (-1)^{|a|}a
\end{equation}
for $a$ homogeneous in $A$. Utilizing the non-trivial $R$-matrix
$R_{g}$ and using eq. \eqref{eq:nontrivRmatrcz2} and eq.
\eqref{eq:act-coact} we can readily deduce the form of the induced
$\mathbb{CZ}_{2}$-coaction on $A$:
\begin{equation} \label{eq:cz2coaction}
\rho(a) = \left\{ \begin{array}{ccc}
    1 \otimes a & , & a: \textrm{even} \\
    g \otimes a & , & a: \textrm{odd} \\
\end{array} \right.
\end{equation}
Let us note here that instead of invoking the non-trivial
quasitriangular structure $R_{g}$ we could alternatively extract
the (left) coaction \eqref{eq:cz2coaction} utilizing the
self-duality of the $\mathbb{CZ}_{2}$ Hopf algebra: For any
abelian group $\mathbb{G}$ a (left) action of $\mathbb{CG}$
coincides with a (right) action of $\mathbb{CG}$. On the other
hand, for any finite group, a (right) action of $\mathbb{CG}$ is
the same thing as a (left) coaction of the dual Hopf algebra
$(\mathbb{CG})^{*}$. Since $\mathbb{CZ}_{2}$ is both finite and
abelian and hence self-dual in the sense that: $\mathbb{CZ}_{2}
\cong (\mathbb{CZ}_{2})^{*}$ as Hopf algebras, it is immediate to
see that the (left) action \eqref{eq:cz2action} and the (left)
coaction \eqref{eq:cz2coaction} are virtually the same thing.

 The above mentioned action and coaction enable us to form the
cross product algebra and the cross coproduct coalgebra according
to the preceding discussion which finally form the smash product
Hopf algebra $A \star \mathbb{CZ}_{2}$.  The grading of $A$, is
``absorbed" in $A \star \mathbb{CZ}_{2}$, and becomes an inner
automorphism:
$$
gag = (-1)^{|a|}a
$$
where we have identified: $a \star 1 \equiv a$ and $1 \star g
\equiv g$ in $A \star \mathbb{CZ}_{2}$ and $a$ homogeneous element
in $A$. This inner automorphism is exactly the adjoint action of
$g$ on $A \star \mathbb{CZ}_{2}$ (as an ordinary Hopf algebra).
The following proposition is proved -as an example of the
bosonisation technique- in \cite{Maj2}:
\begin{proposition} \label{bosonisat}
Corresponding to every super-Hopf algebra $A$ there is an ordinary
Hopf algebra $A \star \mathbb{CZ}_{2}$, its bosonisation,
consisting of $A$ extended by adjoining an element $g$ with
relations, coproduct, counit and antipode:
\begin{equation} \label{eq:HopfPBg}
\begin{array}{cccc}
  g^{2} = 1 & ga = (-1)^{|a|}ag & \Delta(g) = g \otimes g & \Delta(a) = \sum a_{1}g^{|a_{2}|} \otimes a_{2} \\
                                     \\
  S(g) = g & S(a) = g^{-|a|}\underline{S}(a) & \varepsilon(g) = 1 & \varepsilon(a) = \underline{\varepsilon}(a) \\
\end{array}
\end{equation}
where $\underline{S}$ and $\underline{\varepsilon}$ denote the
original maps of the super-Hopf algebra $A$. \\
In the case that $A$ is super-quasitriangular via the $R$-matrix $
\ \underline{R} = \sum \underline{R}^{(1)} \otimes
\underline{R}^{(2)} \ $, then the bosonised Hopf algebra $A \star
\mathbb{CZ}_{2}$ is quasitriangular (in the ordinary sense) via
the $R$-matrix: $ \ R = R_{g} \sum
\underline{R}^{(1)}g^{|\underline{R}^{(2)}|} \otimes
\underline{R}^{(2)} \ $.
 Moreover, the representations of the bosonised Hopf algebra $A
\star \mathbb{CZ}_{2}$ are precisely the super-representations of
the original superalgebra $A$.
\end{proposition}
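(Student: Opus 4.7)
The plan is to specialize the general bosonisation construction reviewed in the preceding subsection to the case $H=\mathbb{CZ}_2$ equipped with the non-trivial quasitriangular structure $R_g$ of \eqref{eq:nontrivRmatrcz2}, and then match the resulting abstract formulas with those claimed in \eqref{eq:HopfPBg}. The input data is a super-Hopf algebra $A$, which by the discussion in \seref{1} is the same thing as a Hopf algebra in ${}_{\mathbb{CZ}_2}\mathcal{M}$, with $\mathbb{CZ}_2$-action \eqref{eq:cz2action} and induced $\mathbb{CZ}_2$-coaction \eqref{eq:cz2coaction}. All the axiomatic verifications (associativity, coassociativity, bialgebra compatibility, antipode) are already built into the general theorem, so the real work reduces to evaluating the cross product, the cross coproduct and the antipode formulas explicitly on the very simple datum of $\mathbb{CZ}_2$.

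First I would compute the multiplication in $A\rtimes\mathbb{CZ}_2$ from \eqref{eq:crosspralg}. Using $\Delta_{\mathbb{CZ}_2}(g)=g\otimes g$ together with the action $g\vartriangleright a=(-1)^{|a|}a$, the only non-obvious relation is $(1\otimes g)(a\otimes 1)=(g\vartriangleright a)\otimes g=(-1)^{|a|}a\otimes g$, which under the identifications $a\star 1\equiv a$ and $1\star g\equiv g$ yields $ga=(-1)^{|a|}ag$; the relation $g^{2}=1$ is inherited from $\mathbb{CZ}_2$. Next, writing the coaction uniformly as $\rho(a)=g^{|a|}\otimes a$ for homogeneous $a$ and substituting into \eqref{eq:crosscoprcoalgR} gives, after the same identifications, $\Delta(a)=\sum a_{1}g^{|a_{2}|}\otimes a_{2}$, while $\Delta(g)=g\otimes g$ is inherited. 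The counit formulas $\underline{\varepsilon}(a)$ and $\varepsilon_{\mathbb{CZ}_2}(g)=1$ follow at once from the tensor-product counit.

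The antipode requires slightly more care. I would evaluate \eqref{antipodecrosspr} with $R=R_{g}$, first computing the Drinfeld-type element $u=\sum S_{\mathbb{CZ}_2}(R_g^{(2)})R_g^{(1)}$, and then tracking signs through the homogeneous decomposition of $a$ and the braided antipode $\underline{S}$ of $A$. The outcome collapses to $S(a)=g^{-|a|}\underline{S}(a)$ on $A$ and $S(g)=g$ on $\mathbb{CZ}_2$. At this point the fact that $(A\star\mathbb{CZ}_2,\Delta,\varepsilon,S)$ is a genuine Hopf algebra is no longer in question: it is guaranteed by the general bosonisation theorem recalled from \cite{Maj2, Maj3, Mo, Ra}.

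For the last two claims I would invoke the corresponding general statements of \cite{Maj1}: the quasitriangular extension produces an $R$-matrix of the shape $R_{g}\cdot\sum\underline{R}^{(1)}g^{|\underline{R}^{(2)}|}\otimes\underline{R}^{(2)}$, in which the first factor is the quasitriangular structure of $\mathbb{CZ}_2$ and the second factor is obtained from the braided $\underline{R}$ by the same coaction trick that produced the coproduct; and the category equivalence between $A$-modules in ${}_{\mathbb{CZ}_2}\mathcal{M}$ and ordinary $(A\star\mathbb{CZ}_2)$-modules identifies super-representations of $A$ with representations of the bosonisation. The main (if modest) obstacle is purely notational: the bookkeeping of signs and of the positions of the powers of $g$ in the antipode and $R$-matrix formulas. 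There is no conceptual difficulty, since all the heavy lifting is carried out by the general theorem.
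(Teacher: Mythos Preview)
Your proposal is correct and matches the paper's treatment: the paper does not give its own detailed proof of this proposition but simply cites it from \cite{Maj2} as an application of the general bosonisation scheme reviewed immediately before, and your specialization of the formulas \eqref{eq:crosspralg}, \eqref{eq:crosscoprcoalgR}, \eqref{antipodecrosspr} to $H=\mathbb{CZ}_2$ with the action \eqref{eq:cz2action} and coaction \eqref{eq:cz2coaction} is precisely that application.
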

The application of the above proposition in the case of the
parabosonic algebra $P_{B}$ is straightforward: we immediately get
it's bosonised form $P_{B(g)}$ which by definition is:
$
P_{B(g)} \equiv P_{B} \star \mathbb{CZ}_{2}
$
Utilizing equations \eqref{eq:HopfPB} which describe the
super-Hopf algebraic structure of the parabosonic algebra $P_{B}$,
and replacing them into equations \eqref{eq:HopfPBg} which
describe the ordinary Hopf algebra structure of the bosonised
superalgebra, we immediately get the explicit form of the
(ordinary) Hopf algebra structure of $P_{B(g)} \equiv P_{B} \star
\mathbb{CZ}_{2}$ which reads:
\begin{equation} \label{eq:HopfPBgexpl}
\begin{array}{cccc}
  \Delta(B_{i}^{\pm}) = B_{i}^{\pm} \otimes 1 + g \otimes B_{i}^{\pm} & \Delta(g) = g \otimes g
  & \varepsilon(B_{i}^{\pm}) = 0 & \varepsilon(g) = 1  \\
          \\
  S(B_{i}^{\pm}) = B_{i}^{\pm}g = -gB_{i}^{\pm} & S(g) = g & g^{2} = 1 & \{g,B_{i}^{\pm}\} = 0  \\
\end{array}
\end{equation}
where we have again identified $b_{i}^{\pm} \star 1 \equiv
b_{i}^{\pm}$ and $1 \star g \equiv g$ in $P_{B} \star
\mathbb{CZ}_{2}$. Finally, we can easily check that since
$\mathbb{CZ}_{2}$ is triangular (via $R_{g}$) and $P_{B}$ is
super-quasitriangular (trivially since it is super-cocommutative)
 it is an immediate consequence of the above proposition that
 $P_{B(g)}$ is quasitriangular (in the ordinary sense) via the
 $R$-matrix $R_{g}$.

\subsubsection{An alternarive approach}

Let us describe now a slightly different construction (see also:
\cite{DaKa, KaDa1, KaDa2}), which achieves the same object: the
determination of an ordinary Hopf structure for the parabosonic
algebra $P_{B}$.
\begin{proposition} \label{altern}
Corresponding to the super-Hopf algebra $P_{B}$ there is an
ordinary Hopf algebra $P_{B(K^{\pm})}$, consisting of $P_{B}$
extended by adjoining two elements $K^{+}$, $K^{-}$ with
relations, coproduct, counit and antipode:
\begin{equation} \label{eq:HopfPBK}
\begin{array}{cc}
  \Delta(B_{i}^{\pm}) = B_{i}^{\pm} \otimes 1 + K^{\pm} \otimes B_{i}^{\pm} & \Delta(K^{\pm}) = K^{\pm} \otimes K^{\pm} \\
       \\
  \varepsilon(B_{i}^{\pm}) = 0 & \varepsilon(K^{\pm}) = 1 \\
                     \\
  S(B_{i}^{\pm}) = B_{i}^{\pm}K^{\mp} & S(K^{\pm}) = K^{\mp} \\
                      \\
  K^{+}K^{-} = K^{-}K^{+} = 1 & \{K^{+},B_{i}^{\pm}\} = 0 = \{K^{-},B_{i}^{\pm}\} \\
\end{array}
\end{equation}
\end{proposition}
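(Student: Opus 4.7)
The plan is to mimic the proof of the super-Hopf structure of $P_B$, now working entirely in the ordinary (unbraided) tensor product. First I would realise $P_{B(K^\pm)}$ concretely by enlarging $V_X$ with two extra basis vectors $Y^+, Y^-$, forming the tensor algebra $T$, and quotienting by the two-sided ideal $J$ generated by the expressions of \eqref{eq:pbdef} together with $Y^+Y^- - 1$, $Y^-Y^+ - 1$, and the anticommutators $\{Y^\pm, X_i^\epsilon\}$; put $B_i^\pm := \pi(X_i^\pm)$ and $K^\pm := \pi(Y^\pm)$ for the canonical projection $\pi : T \to P_{B(K^\pm)}$.

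Next, the maps $\Delta, \varepsilon, S$ are prescribed on the generators of $T$ by \eqref{eq:HopfPBK} ($\Delta$ landing in $P_{B(K^\pm)} \otimes P_{B(K^\pm)}$ and $S$ built through the opposite algebra $P_{B(K^\pm)}^{op}$, exactly as in the super-Hopf proof), lifted to $T$ by universality, and shown to vanish on each generator of $J$. The purely $K^\pm$ relations are immediate: $\Delta(K^+)\Delta(K^-) = K^+K^- \otimes K^+K^- = 1 \otimes 1$, and $\Delta$ of $\{K^\pm, B_i^\epsilon\}$ expands into a $1$-factor and a $K^\pm K^\epsilon$-factor, each proportional to $\{K^\pm, B_i^\epsilon\} = 0$.

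The main technical obstacle is that $\Delta$ must annihilate the cubic paraboson expression \eqref{eq:pbdef}. Using $\{K^\pm, B_j^\eta\} = 0$ together with the fact that any product $K^\xi K^\eta$ is parity-even and therefore commutes with all $B$'s, direct computation gives $\{\Delta(B_i^\xi), \Delta(B_j^\eta)\} = \{B_i^\xi, B_j^\eta\} \otimes 1 + K^\xi K^\eta \otimes \{B_i^\xi, B_j^\eta\}$, and then $[\{\Delta(B_i^\xi), \Delta(B_j^\eta)\}, \Delta(B_k^\epsilon)] = [\{B_i^\xi, B_j^\eta\}, B_k^\epsilon] \otimes 1 + K^\xi K^\eta K^\epsilon \otimes [\{B_i^\xi, B_j^\eta\}, B_k^\epsilon]$. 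On the first tensor factor, adding the correction $-(\epsilon-\eta)\delta_{jk}B_i^\xi - (\epsilon-\xi)\delta_{ik}B_j^\eta$ reassembles \eqref{eq:pbdef} and so gives zero. On the second factor, the key observation is that whenever $(\epsilon-\eta)\delta_{jk} \neq 0$ one has $\eta = -\epsilon$ and hence $K^\xi K^\eta K^\epsilon = K^\xi$; similarly $K^\xi K^\eta K^\epsilon = K^\eta$ whenever $(\epsilon-\xi)\delta_{ik} \neq 0$. The second-factor terms therefore also reassemble \eqref{eq:pbdef} and vanish. The remaining Hopf axioms are then routine on the generators $B_i^\pm, K^\pm$; for instance $m(S \otimes \mathrm{id})\Delta(B_i^+) = B_i^+ K^- + S(K^+)B_i^+ = B_i^+ K^- - B_i^+ K^- = 0 = \varepsilon(B_i^+)$.

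Conceptually the construction fits the Radford--Majid biproduct picture: the paraboson relations are homogeneous also under the finer $\mathbb{Z}$-grading $|B_i^+|_\mathbb{Z} = +1$, $|B_i^-|_\mathbb{Z} = -1$, yielding a natural $\mathbb{CZ}$-coaction which, combined with the parity $\mathbb{CZ}$-action (factoring through $\mathbb{Z} \twoheadrightarrow \mathbb{Z}_2$), places $P_B$ as a Hopf algebra in the Yetter--Drinfeld category ${}^{\mathbb{CZ}}_{\mathbb{CZ}}\mathcal{YD}$; the resulting biproduct $P_B \star \mathbb{CZ}$ automatically produces \eqref{eq:HopfPBK}. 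However, the direct verification sketched above is more self-contained and precisely parallels the super-Hopf proof given earlier.
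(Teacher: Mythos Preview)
Your direct argument is correct and is essentially the paper's proof: realise $P_{B(K^{\pm})}$ as a quotient of the free algebra on $X_i^{\pm},K^{\pm}$, define $\Delta,\varepsilon,S$ on the free algebra by universality (with $S$ routed through $P_{B(K^{\pm})}^{op}$), check that each map kills the ideal generators, and finish the Hopf axioms on generators. The paper records the same steps but suppresses the computation you spell out for \eqref{eq:Db}; your explicit treatment of the $K^{\xi}K^{\eta}K^{\epsilon}$ coefficient, using that $(\epsilon-\eta)\neq 0$ forces $\eta=-\epsilon$ so $K^{\eta}K^{\epsilon}=1$, is exactly the missing detail behind the paper's one-line ``$=0$''.

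Your closing biproduct remark is an extra, and a good one: the paper does not mention it. Reading $P_B$ with its $\mathbb{Z}$-grading $|B_i^{\pm}|_{\mathbb{Z}}=\pm 1$ as coaction and with the parity $\mathbb{Z}$-action (factored through $\mathbb{Z}\twoheadrightarrow\mathbb{Z}_2$) does place $P_B$ in ${}^{\mathbb{CZ}}_{\mathbb{CZ}}\mathcal{YD}$ with the induced braiding equal to the super-braiding, and the Radford biproduct $P_B\star\mathbb{CZ}$ then reproduces \eqref{eq:HopfPBK} with $K^{\pm}$ the images of $g^{\pm 1}\in\mathbb{Z}$. This gives a conceptual reason why two distinct grouplikes $K^{+}\neq K^{-}$ appear here while only a single $g$ appears in the bosonisation $P_{B(g)}$: the latter uses the $\mathbb{Z}_2$-coaction coming from parity, the former the finer $\mathbb{Z}$-coaction coming from charge.
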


\begin{proof}
Consider the vector space $\mathbb{C}\langle X_{i}^{+}, X_{j}^{-},
K^{\pm} \rangle$ freely generated by the elements $X_{i}^{+},
X_{j}^{-}, K^{+}, K^{-}$.  Denote $T(X_{i}^{+}, X_{j}^{-},
K^{\pm})$ its tensor algebra. In the tensor algebra we denote
$I_{BK}$ the ideal generated by all the elements of the forms
\eqref{eq:pbdef} together with: $\ K^{+}K^{-}-1 \ $, $\
K^{-}K^{+}-1 \ $, $\ \{K^{+},X_{i}^{\pm}\} \ $, $\
\{K^{-},X_{i}^{\pm}\} \ $. We define:
$$
P_{B(K^{\pm})} = T(X_{i}^{+}, X_{j}^{-}, K^{\pm}) / I_{BK} \
$$
We denote by $B_{i}^{\pm}, K^{\pm}$ where $i = 1, 2, \ldots \ $
the images of the generators $X_{i}^{\pm}, K^{\pm}$, $ \ i = 1, 2,
\ldots \ $ of the tensor algebra, under the canonical projection.
These are a set of generators of $P_{B(K^{\pm})}$. Consider the
linear map $ \Delta : \mathbb{C}\langle X_{i}^{+}, X_{j}^{-},
K^{\pm} \rangle \rightarrow P_{B(K^{\pm})} \otimes P_{B(K^{\pm})}
$ determined by: determined by
$$
\begin{array}{c}
\Delta(X_{i}^{\pm}) = B_{i}^{\pm}
\otimes 1 + K^{\pm} \otimes B_{i}^{\pm} \\
    \\
 \Delta(K^{\pm}) = K^{\pm} \otimes K^{\pm}  \\
\end{array}
$$
By the universality property of the tensor algebra, this map
extends to an algebra homomorphism: $ \Delta: T(X_{i}^{+},
X_{j}^{-}, K^{\pm}) \rightarrow P_{B(K^{\pm})} \otimes
P_{B(K^{\pm})} $. We emphasize that the usual tensor product
algebra $P_{B(K^{\pm})} \otimes P_{B(K^{\pm})}$ is now considered,
with multiplication $(a \otimes b)(c \otimes d) = ac \otimes bd$
for any $a,b,c,d \in P_{B(K^{\pm})}$.
 Now we can
trivially verify that
\begin{equation} \label{eq:DKb}
 \Delta(\{K^{\pm},X_{i}^{\pm}\})
= \Delta(K^{+}K^{-} -1) = \Delta(K^{-}K^{+}-1) = 0
\end{equation}
We also compute:
\begin{equation} \label{eq:Db}
\Delta(\big[ \{ X_{i}^{\xi},  X_{j}^{\eta}\}, X_{k}^{\epsilon}
\big] -
 (\epsilon - \eta)\delta_{jk}X_{i}^{\xi} - (\epsilon -
 \xi)\delta_{ik}X_{j}^{\eta}) = 0
\end{equation}
Relations \eqref{eq:DKb}, and \eqref{eq:Db}, mean that $I_{BK}
\subseteq ker \Delta$ which in turn implies that $\Delta$ is
uniquely extended as an algebra homomorphism from $
P_{B(K^{\pm})}$ to  the usual tensor product algebra
$P_{B(K^{\pm})} \otimes P_{B(K^{\pm})}$, with the values on the
generators determined by \eqref{eq:HopfPBK}. \\
 Following the same
procedure we construct an algebra homomorphism $\varepsilon:
P_{B(K^{\pm})} \rightarrow \mathbb{C}$ and an algebra
antihomomorphism $S: P_{B(K^{\pm})} \rightarrow P_{B(K^{\pm})}$
which are completely determined by their values on the generators
of $P_{B(K^{\pm})}$ (i.e.: the basis elements of
$\mathbb{C}\langle X_{i}^{+}, X_{j}^{-}, K^{\pm} \rangle)$. Note
that in the case of the antipode we start by defining a linear map
$S$ from $\mathbb{C}\langle X_{i}^{+}, X_{j}^{-}, K^{\pm} \rangle$
to the opposite algebra $P_{B(K^{\pm})}^{op}$, with values
determined by: $S(X_{i}^{\pm}) = B_{i}^{\pm}K^{\mp}$ and
$S(K^{\pm}) = K^{\mp} \ $. Following the above described procedure
we end up with an algebra anti-homomorphism:
$S: P_{B(K^{\pm})} \rightarrow P_{B(K^{\pm})}$.  \\
Now it is sufficient to verify the rest of the Hopf algebra axioms
(i.e.: coassociativity of $\Delta$, counity property for
$\varepsilon$, and the compatibility condition which ensures us
that $S$ is an antipode) on the generators of $P_{B(K^{\pm})}$.
This can be done with straightforward computations (see
\cite{DaKa}).
\end{proof}
Let us notice here, that the initiation for the above mentioned
construction lies in the case of the finite degrees of freedom: If
we consider the parabosonic algebra in $2n$ generators
($n$-paraboson algebra) and denote it $P_{B}^{(n)}$, it is
possible to construct explicit realizations of the elements
$K^{+}$ and $K^{-}$ in terms of formal power series, such that the
relations specified in \eqref{eq:HopfPBK} hold. The construction
is briefly (see also \cite{DaKa}) as follows: We define
$$
\mathcal{N} = \sum_{i=1}^{n}N_{ii} =
\frac{1}{2}\sum_{i=1}^{n}\{B_{i}^{+},B_{i}^{-}\}
$$
We inductively prove:
\begin{equation} \label{eq:Casimcom}
[\mathcal{N}^{m}, B_{i}^{+}] = B_{i}^{+}((\mathcal{N} + 1)^{m} -
\mathcal{N}^{m})
\end{equation}
We now introduce the following elements:
$$
\begin{array}{ccccc}
  K^{+} = \exp(i \pi \mathcal{N}) & & & & K^{-} = \exp(-i \pi \mathcal{N}) \\
\end{array}
$$
Utilizing the above power series expressions and equation
\eqref{eq:Casimcom} we get
\begin{equation} \label{eq:Kb}
\begin{array}{lr}
 \{K^{+},B_{i}^{\pm}\} = 0 &  \{K^{-},B_{i}^{\pm}\} = 0 \\
\end{array}
\end{equation}
A direct application of the Baker-Campbell-Hausdorff formula leads
also to:
\begin{equation} \label{eq:KK}
K^{+}K^{-} = K^{-}K^{+} = 1
\end{equation}
Finally let us make a few comments on the above mentioned
constructions.

From the point of view of the structure, an obvious question
arises: While $P_{B(g)}$ is a quasitriangular Hopf algebra through
the $R$-matrix: $R_{g}$ given in eq. \eqref{eq:nontrivRmatrcz2},
there is yet no suitable $R$-matrix for the Hopf algebra
$P_{B(K^{\pm})}$. Thus the question of the quasitriangular
structure of $P_{B(K^{\pm})}$ is open.

On the other hand, regarding representations, we have already
noted that the super representations of $P_{B}$ (
$\mathbb{Z}_{2}$-graded modules of $P_{B}$ or equivalently:
$P_{B}$-modules in
 ${}_{\mathbb{CZ}_{2}}\mathcal{M}$ ) are in $``1-1"$ correspodence
 with the (ordinary) representations of $P_{B(g)}$. Although we do
 not have such a strong result for the representations of
 $P_{B(K^{\pm})}$, the preceding construction in the case of
 finite degrees of freedom enables us to uniquely extend the Fock-like
 representations of $P_{B}^{(n)}$ to representations of
 $P_{B(K^{\pm})}^{(n)}$. Since the Fock-like representations of
 $P_{B}$ are unique up to unitary equivalence (see the proof in
 \cite{GreeMe} or \cite{OhKa}), this is a point which deserves to be discussed
 analytically in a forthcoming work.

\textbf{Acknowledgements:} This paper is part of a
 project supported by ``Pythagoras II", contract number 80897.

\end{document}